\documentclass{llncs}
\usepackage{graphicx}
\usepackage{mathptmx}
\usepackage{amsmath}
\usepackage{hyperref}
\usepackage{cite}

\DeclareSymbolFont{largesymbols}{OMX}{cmex}{m}{n}

\makeatletter
\def\hb@xt@{\hbox to }
\makeatother

\let\oldendproof\endproof
\def\endproof{\qed\oldendproof}

\begin{document}
\title{Solving Single-digit Sudoku Subproblems} 

\author{David Eppstein}

\institute{Computer Science Department, University of California, Irvine}

\maketitle   

\pagestyle{plain}

\begin{abstract}
We show that single-digit ``Nishio'' subproblems in $n\times n$ Sudoku puzzles may be solved in time $o(2^n)$, faster than previous solutions such as the pattern overlay method. We also show that single-digit deduction in Sudoku is NP-hard.
\end{abstract}

\section{Introduction}

Sudoku puzzles, appearing daily in newspapers and collected in many books, take the form of a $9\times 9$ grid of cells, some of which are filled in with numbers in the range from 1 to 9, such as the one in  Figure~\ref{fig:example} below.
The task for the puzzle solver is to fill in the remaining cells so that the numbers within every row, column, or $3\times 3$ block of cells (bounded by the heavier border lines) are all distinct.

\begin{figure}[h]
\centering\includegraphics[height=1.75in]{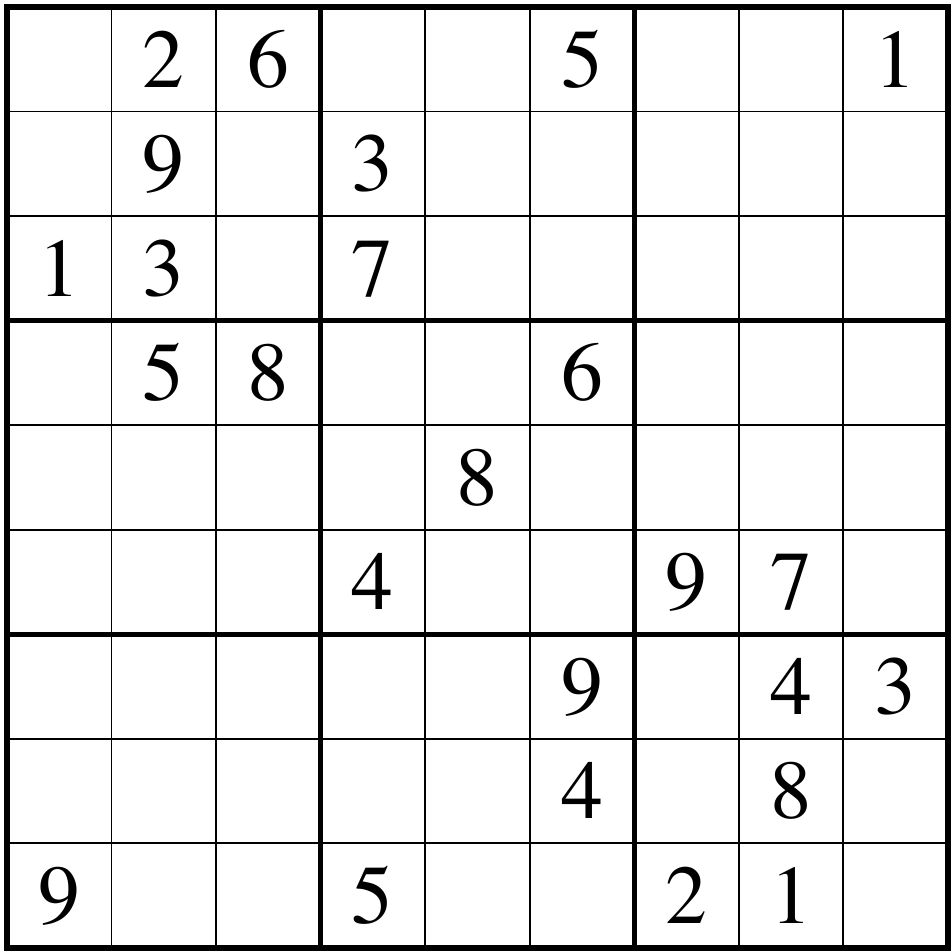}
\caption{A difficult Sudoku puzzle.}
\label{fig:example}
\end{figure}

Every properly-designed Sudoku puzzle has a unique solution, and in most cases this solution can be found by humans via a sequence of deductions that does not involve trial and error or backtracking. Sudoku puzzlists typically learn a repertoire of deductive rules that allow them to make steps towards solving a puzzle whenever they can find certain patterns in it. As an example, if one of the $3\times 3$ blocks can only contain digit $d$ in cells that belong to a single column, then $d$ cannot be placed in any cell that belongs to in that column but not to that block, for doing so would make it impossible to place $d$ within the block.  Thus, in Figure~\ref{fig:example}, the digit 6 must be placed in one of two cells in the middle column of the top middle $3\times 3$ block, and therefore it cannot be placed in the middle column of the bottom middle block. Rules such as this, and exercises to help learn them, may be found in guidebooks for Sudoku puzzle solving; see, e.g.,~\cite{GorLon-06,Ste-07}.

Common variants of Sudoku shrink the puzzle to a $6\times 6$ grid with $2\times 3$ blocks, or expand it to a $16\times 16$ grid with $4\times 4$ blocks; more generally, Sudoku puzzles may use any $ab\times ab$ grid with blocks of size $a\times b$. This variability allows us to apply the tools of computational complexity to Sudoku, and it has been shown that, when generalized to $n\times n$ grids, solving Sudoku is NP-hard~\cite{YatSet-ITF-03}. In practice, however, it is not difficult for computers to solve $9\times 9$ Sudoku puzzles using backtracking together with simple deductive rules to determine the consequences of each choice and prune the search tree whenever an inconsistency is discovered~\cite{Bro-NAW-06,cs.DS/0507053}.  Worst case bounds on the time to solve Sudoku problems may be found by reducing Sudoku to graph list coloring on a graph with $n^2$ vertices, one for each cell, with a color for each number that may be placed within a cell and with edges connecting pairs of cells that belong to the same row, column, or block~\cite{HerMur-NAMS-07,RosTaa-11-Ch7.3}. A coloring algorithm of Lawler~\cite{Law-IPL-76},  specialized to these graphs (using the fact that any color class in a valid coloring may be represented as a permutation) can solve any Sudoku puzzle in worst case time $O(2^{n^2}\cdot n! \cdot n^{O(1)})$. Sudoku puzzles may be reduced to instances of \emph{exact satisfiability} (satisfiability of CNF formulae by an assignment that makes exactly one term in each clause true) with $O(n^2)$ variables and clauses~\cite{OstPar-ICTAI-08} or to instances of exact set cover with $O(n^2)$ sets and elements~\cite{HunPonTuc-UMAP-08}, again achieving a time bound exponential in $n^2$.
Sudoku puzzles have also been used as a test case for many varied computer problem-solving techniques including constraint programming~\cite{Sim-MRCSP-05}, reduction to satisfiability~\cite{LynOua-AIMATH-06,Web-LPAR-05}, harmony search~\cite{Gee-KBIIES-07}, metaheuristics~\cite{Lew-Heu-07}, genetic algorithms~\cite{ManKol-CEC-07}, particle swarms~\cite{MorTog-GECCO-07}, belief propagation~\cite{MooGun-MWALS-06}, projection onto the Birkhoff polytope~\cite{MooGunKup-TIT-09}, Gr\"obner bases~\cite{ArnLucTaa-CMJ-10}, mixed integer programming~\cite{Koc-ORP-05}, and compressed sensing~\cite{BabPelSto-SPL-10}.

Despite the ease of solving Sudoku using these methods, there is still a need for computer puzzle solvers that eschew both trial and error and sophisticated computer search techniques, and that instead mimic human deduction~\cite{Bro-NAW-06,cs.DS/0507053}. With a deductive solver, we may estimate the difficulty of a puzzle by solving it using rules that are as simple as possible and then assessing the cognitive complexity of the rules that were needed. Difficulty estimation techniques of this type may also be used as a subroutine to guide the puzzle generation process~\cite{HunPonTuc-UMAP-08}, or to explain a puzzle's solution to a human, in order to help teach the human how to solve Sudoku puzzles more easily. Additionally, deductive puzzle solving algorithms may cast light on human psychology, by helping us understand what kinds of thought processes are easier or more difficult for humans.

One of the more complex deduction rules commonly used by human Sudoku puzzlers considers the possible positions still open to a single digit $d$, and eliminates possible placements whenever they cannot be part of a consistent placement of $d$ in every row, column, and block of the puzzle, regardless of how the other digits are placed. This technique is often called \emph{Nishio}, after Tetsuya Nishio, although the same word may also apply to unrestricted trial-and-error solution. Nishio deductions on a $9\times 9$ board are within the ability of skilled humans to solve mentally, often without written notes, but it is not obvious how to implement them in a computer simulation of human reasoning. In our prior work on a Sudoku solver~\cite{cs.DS/0507053} we used a polynomial time approximation to Nishio based on graph matching algorithms, that always makes correct deductions but that, by ignoring the block structure of a Sudoku puzzle, is not always able to solve puzzles that can be solved by Nishio such as the one in Figure~\ref{fig:example}.

The main result of this paper is an algorithm for solving this single-digit deduction problem exactly, with running time $o(2^n)$. Although its running time is exponential, our new algorithm is implementable, faster than the $n!/2^{O(n)}$ time of previously used but naive pattern overlay methods, faster than the $2^nn^{O(1)}$ time of a reduction to 3-dimensional matching~\cite{Bjo-STACS-10}, and fast enough to include in a Sudoku solver.
Secondarily, we explain the exponential running time of our solution and of the other previous solutions for this problem by showing that single-digit deduction is NP-hard. 

\section{Nishio}

\def\Nishio{\mathop{\mathrm{Nishio}}}

Define a \emph{valid placement} of an $n\times n$ Sudoku puzzle to be a set of $n$ cells of the puzzle that includes exactly one cell in each row, column, and block of the puzzle, and let $\mathcal{V}$ denote the set of valid placements. Suppose that the initial hints of a puzzle and prior deductions have limited the placements of a digit $d$ to a set $S$ of possible cells (including the ones already known to contain $d$). Then, we may further restrict $S$ to its subset
$$\Nishio(S)=\bigcup\bigl\{V\mid (V\in\mathcal{V})\wedge(V\subset S)\bigr\}.$$
That is, we find the valid placements that are subsets of $S$,  let $\Nishio(S)$ be the set of cells participating in at least one valid placement, and eliminate potential placements for $d$ that do not belong to $\Nishio(S)$. This calculation defines the Nishio deduction rule.

\begin{figure}[t]
\centering\includegraphics[height=1.25in]{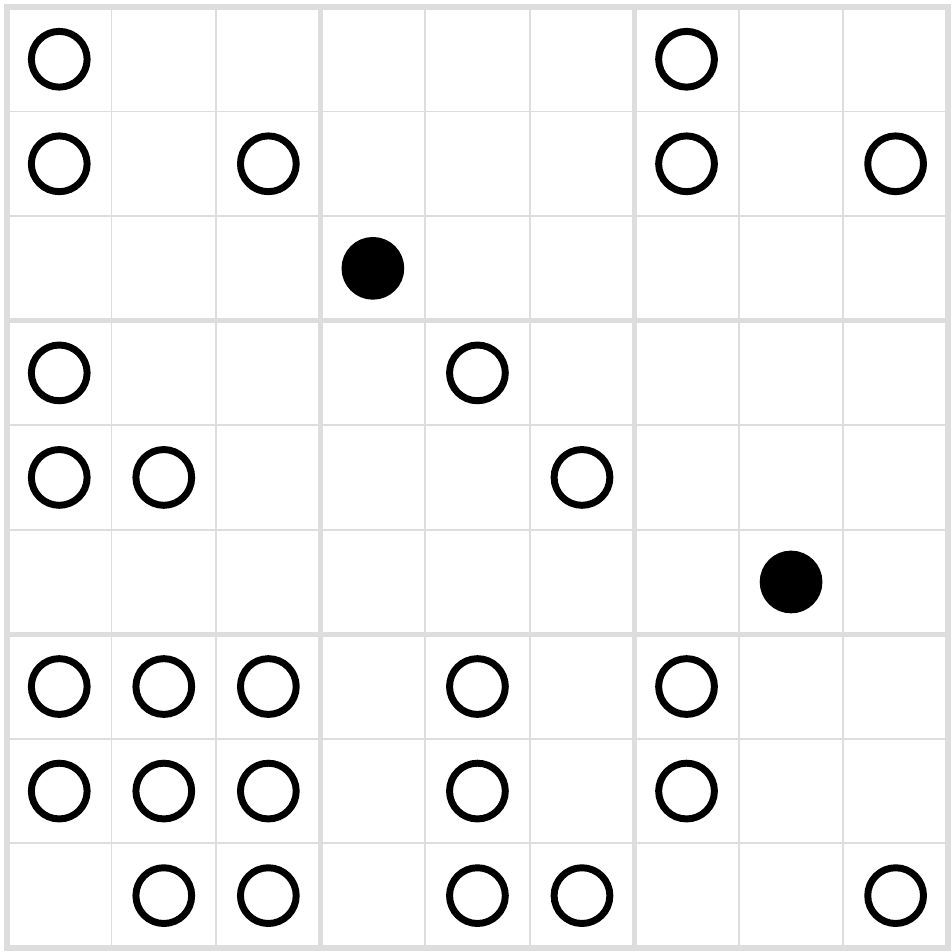}\qquad\includegraphics[height=1.25in]{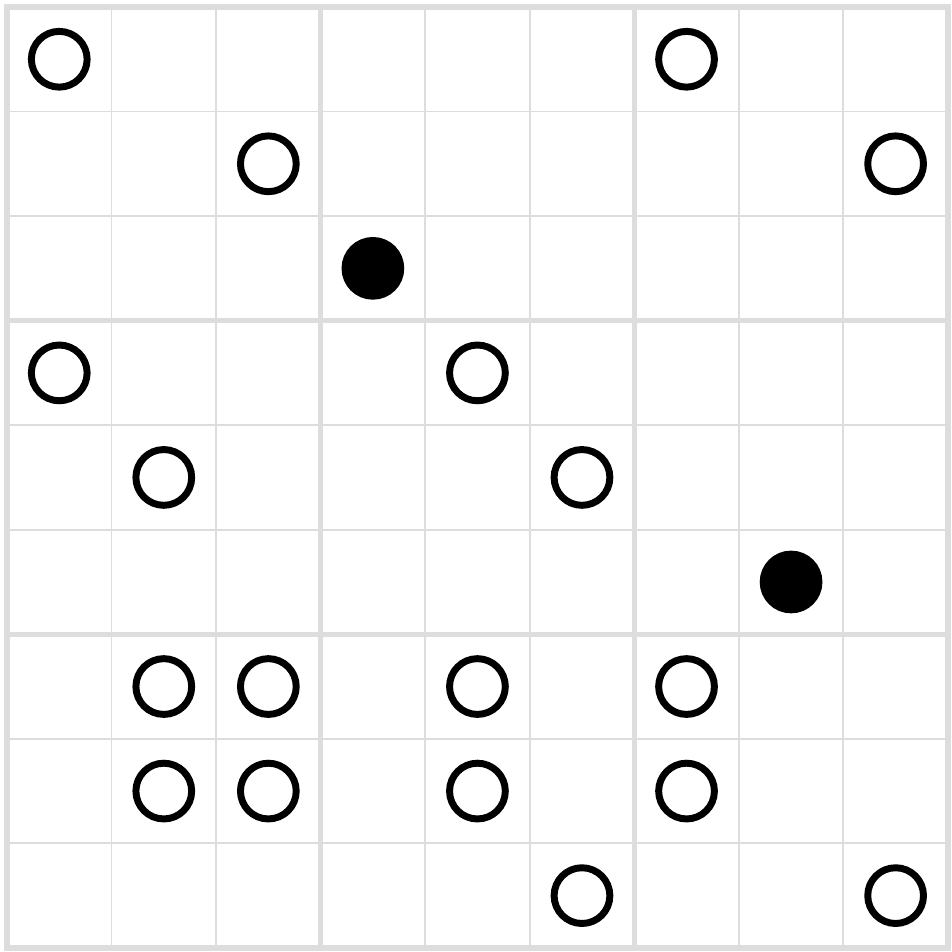}
\caption{The available positions for the digit 7 in the puzzle of Figure~\ref{fig:example} before applying Nishio (left) and after applying Nishio (right). Positions that might possibly hold a 7 are shown as open circles, and positions that must be a 7 are shown as solid disks.}
\label{fig:ex7}
\end{figure}

For instance, consider the Sudoku puzzle in Figure~\ref{fig:example}. After several straightforward deductions, two cells that were initially open to the digit 7 become occupied by other digits, after which the remaining set $S$ of cells in which 7's may be placed is shown in Figure~\ref{fig:ex7}(left). $\Nishio(S)$ is the subset of $S$ shown in Figure~\ref{fig:ex7}(right). 
The cell in the middle of the bottom row (row 9 column 5, or R9C5 for short) is not part of any valid placement that is a subset of $S$, and is therefore not in $\Nishio(S)$. This may be seen by finding the valid placements in~$S$ (there are four of them) and checking that none of them contains this cell. An alternative method is to follow a chain of implications that would follow from placing a 7 in R9C5, using the fact that each $3\times 3$ blocks of the puzzle mus contain the digit~7. If a 7 is placed in R9C5, it would force the placement of a 7 in, successively, R5C6, R4C1, R2C3, R1C7, and R9C9. But then, there would be two 7's in row 9, in the cells R9C5 and R9C9. This contradiction shows that the placement of a 7 into R9C5 is impossible. Only one other digit, 3, can be placed into R9C5, and after this placement another straightforward sequence of deductions completes the puzzle.

Graph matching can be used in a polynomial time approximation to $\Nishio(S)$~\cite{cs.DS/0507053}. Define a bipartite graph $G$ for which the $n$ vertices on one side of the bipartition represent the rows of the puzzle, the $n$ vertices on the other side represent the columns, and in which an edge exists between two vertices $r$ and $c$  if and only if the cell in row $r$ and column $c$ of the puzzle belongs to $S$. The perfect matchings of $G$ correspond to subsets of $S$ that cover each row and column of the puzzle exactly once. If $T$ is the set of edges that participate in at least one perfect matching, then $\Nishio(S)\subseteq T\subseteq S$, and $T$ may be safely used as an approximation to $\Nishio(S)$. However,
by ignoring the requirement that a placement must have exactly one cell in each block, this method allows some invalid placements and therefore may not make as many deductions as the full Nishio rule. In particular, this approximate Nishio method is incapable of solving Figure~\ref{fig:example}.

The calculation of $\Nishio(S)$ may be expressed as an exact satisfiability problem with $3n$ clauses, one for each row, column, and block of the puzzle. Known methods for exact satisfiability of instances with $c$ clauses take time $2^cc^{O(1)}$~\cite{BjoHus-Algo-08,Mad-IPL-06}, and can therefore be used to compute $\Nishio(S)$ in time $8^n n^{O(1)}$. Alternatively it can be expressed as a 3-dimensional matching problem, and solved in randomized expected time $2^n n^{O(1)}$~\cite{Bjo-STACS-10}.

In practice, Nishio deductions are often implemented using an algorithm called the \emph{pattern overlay method}~\cite{Bro-NAW-06,POM}. In this method, one precomputes a list of all the valid placements in $\mathcal{V}$; in $9\times 9$ Sudoku, there are $46656$ of them. Then,
whenever a Nishio deduction is called for, one loops through this list, testing whether each placement is a subset of $S$ and computing the union of the placements that are subsets of $S$. In $n\times n$ Sudoku there are $n!/2^{O(n)}$ valid placements, each of which may be tested in time $O(n^2)$, so the method takes time $O(n!/2^{O(n)})$. Although usable for $9\times 9$ Sudoku, this method is too slow for $16\times 16$ Sudoku, which has more than $10^{11}$ valid placements.

Our goal in this paper is to show how to perform the full Nishio deduction rule while at the same time reducing its time bound to something that, while still exponential, is significantly faster than the previously known methods described above.

\section{Algorithm}
\label{sec:alg}

We now describe our algorithm for calculating $\Nishio(S)$. Our main idea is to construct a directed acyclic graph $G_S$ whose source-to-sink paths correspond one-to-one with the valid placements of the Sudoku puzzle, with the~$i$th edge of a path corresponding to the cell in column~$i$ of a placement. We use depth first search to determine the set of graph edges that participate in source-to-sink paths; the set $\Nishio(S)$ that we wish to calculate is the set of puzzle cells corresponding to these participating graph edges.

The vertices of our graph will correspond to certain sets of rows of a Sudoku puzzle, and it is convenient to represent these sets using $0$--$1$ matrices. For instance, a set $R$ of rows in a standard $9\times 9$ Sudoku puzzle may be represented by a $3\times 3$ matrix
$$\left[ \begin{array}{ccc}
r_1 & r_2 & r_3 \\
r_4 & r_5 & r_6 \\
r_7 & r_8 & r_9
\end{array} \right]$$
where each coefficient is either $0$ or $1$: $r_i=1$ when row~$i$ belongs to $R$, and $r_i=0$ when row~$i$ does not belong to~$R$. Coefficients in the same row of the matrix as each other correspond to rows of the Sudoku puzzle that pass through the same blocks as each other; coefficients that are in different rows of the matrix correspond to puzzle rows that pass through disjoint blocks.
In the same way, if a Sudoku puzzle has blocks of size $a\times b$, we may represent sets of its rows by $0$--$1$ matrices of dimensions $b\times a$ in such a way that two coefficients belong to the same row of a matrix if and only if the corresponding two puzzle rows pass through the same blocks. However, not every set of puzzle rows, and not every $0$-$1$ matrix, will be suitable for use in our graph: we restrict our graph's vertices to the \emph{almost-balanced matrices}, defined below.

\begin{definition}
An \emph{almost-balanced matrix} is a $0$-$1$ matrix in which every two rows have numbers of nonzeros that differ by at most one.
The \emph{count} of a $0$-$1$ matrix is its total number of nonzeros.
\end{definition}

\begin{figure}[t]
\centering\includegraphics[width=\textwidth]{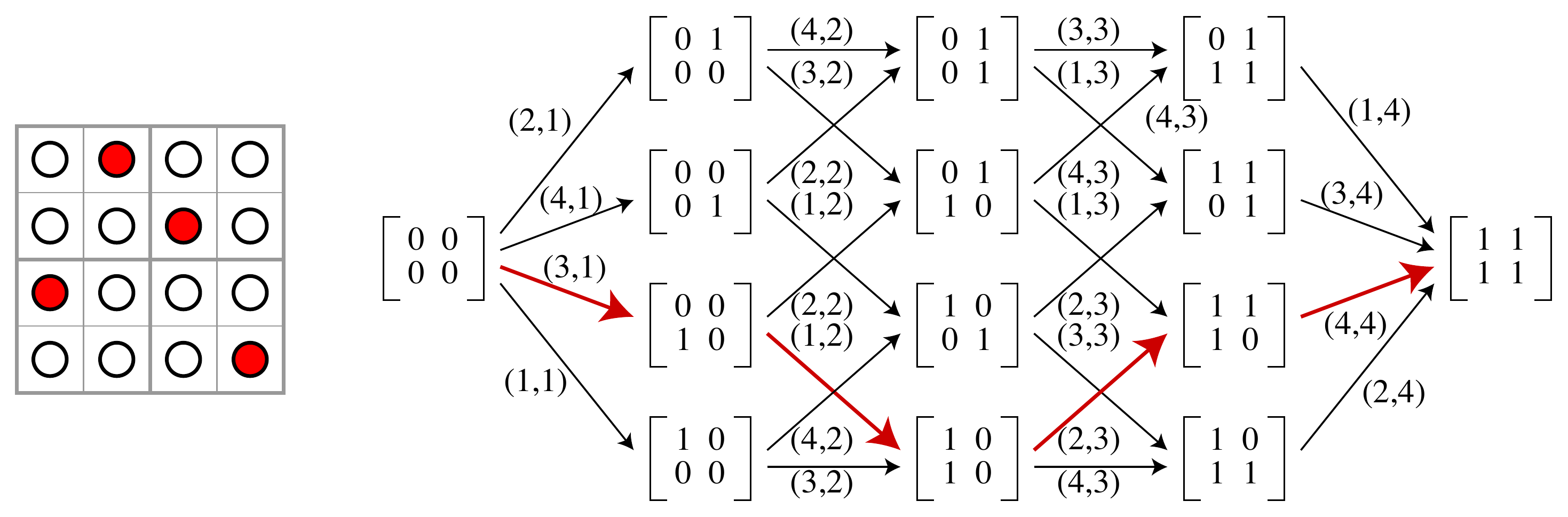}
\caption{Left: A valid placement of four cells among the sixteen cells of a $4\times 4$ Sudoku puzzle grid. Right: the graph $G_S$ of almost-balanced matrices constructed from these sixteen cells, with its edges labeled by puzzle positions, and the correspondence between placements and paths.}
\label{fig:reachability}
\end{figure}

Next we describe the edges of  graph $G_S$. Each edge will correspond to a pair of almost-balanced matrices that differ from each other in a single coefficient. More specifically, suppose that $A$ and $B$ are two almost-balanced matrices, differing from each other in their $r$th coefficient, and let this coefficient be $0$ in $A$ and $1$ in $B$. Let $c$ be the count of matrix $B$. Then, if the cell in row $r$ and column $c$ of the puzzle belongs to $S$, we add an arc in $G_S$ from $A$ to $B$, labeled by the position $(r,c)$. However, if puzzle cell $(r,c)$ does not belong to $S$, we leave vertices $A$ and $B$ disconnected from each other in $G_S$.

\begin{lemma}
\label{lem:path-placement-correspondence}
The valid placements in $S$ are in one-to-one correspondence with the paths in $G_S$ from the all-zero matrix to the all-one matrix.
\end{lemma}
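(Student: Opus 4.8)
The plan is to exhibit an explicit bijection and check it preserves the "valid placement" / "source-to-sink path" structure in both directions. First I would set up notation: a path from the all-zero matrix to the all-one matrix in $G_S$ has length exactly $ab=n$, since each arc increases the count of the current matrix by exactly one, and the count must go from $0$ to $n$. Moreover, by the construction of the arcs, the $i$th arc of such a path is labeled by a puzzle position $(r,i)$ whose column coordinate equals the count of the head matrix, i.e.\ equals $i$; so the arc labels of a path, read in order, are positions lying in columns $1,2,\dots,n$, one per column. Let $P$ be the set of these $n$ labels. I would then show that $P$ is a valid placement contained in $S$: containment in $S$ is immediate from the definition of arcs, the "one cell per column" property is the observation just made, and the "one cell per row" and "one cell per block" properties follow from the almost-balanced invariant together with the fact that moving from the all-zero to the all-one matrix toggles each of the $n$ coefficients exactly once.

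The key structural point I would isolate as a lemma-within-the-proof is this: along any source-to-sink path, when the current matrix has count $i-1$ (so we are about to choose the column-$i$ cell), the matrix is almost-balanced, and after $n$ steps it is the all-one matrix, so every coefficient flips exactly once over the whole path; hence the multiset of row-indices $r$ appearing as first coordinates of the labels is exactly $\{1,\dots,n\}$ — each Sudoku row used once. For the block condition: the coefficient of a row $r$ lives in row $\rho$ of the $b\times a$ matrix, where $\rho$ is determined by which block-band $r$ belongs to; rows of the matrix in the same matrix-row correspond to puzzle rows through the same blocks. Since each matrix coefficient flips exactly once, each matrix-row accumulates exactly $a$ flips total, i.e.\ exactly $a$ of the chosen cells lie in each block-band of rows; combined with the one-cell-per-column property and the block width $b$, a counting/pigeonhole argument forces exactly one chosen cell in each of the $ab$ blocks. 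So $P\in\mathcal V$ and $P\subset S$.

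Conversely, given a valid placement $V\in\mathcal V$ with $V\subset S$, I would build a path: order the cells of $V$ by their column coordinate as $(r_1,1),(r_2,2),\dots,(r_n,n)$, and let $A_0$ be the all-zero matrix and $A_i$ be obtained from $A_{i-1}$ by setting the coefficient for row $r_i$ to $1$. I must check (a) each $A_i$ is almost-balanced, and (b) each consecutive pair $(A_{i-1},A_i)$ is actually an arc of $G_S$ with the correct label. For (b), the count of $A_i$ is $i$ and the differing coefficient is the one for row $r_i$, so the arc, if present, is labeled $(r_i,i)$, which is in $S$ because $V\subset S$; and $A_n$ is the all-one matrix since $V$ meets every row exactly once. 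The only real work is (a): I would argue that because $V$ has exactly one cell in each block, and blocks in a common band of block-rows are visited by columns in disjoint ranges of size $b$, after processing the first $i$ columns the number of chosen rows in any one matrix-row differs from that in any other by at most one — this is where the interleaving of the column order with the block structure matters, and it is the step I expect to be the main obstacle: making precise why scanning columns left to right keeps the per-matrix-row counts balanced to within one at every prefix, not merely at the end. Once (a) and (b) are in hand, the two constructions are visibly inverse to each other (both are "the $i$th column cell has row index $r_i$"), so the correspondence is a bijection, completing the proof.
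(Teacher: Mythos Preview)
Your overall approach matches the paper's: exhibit the natural bijection (read off the column-$i$ cell from the $i$th arc label, and conversely build the path by scanning columns left to right) and verify both directions. The placement$\to$path direction is fine; you correctly isolate the almost-balanced check on the prefixes $A_i$ as the main point, and the paper dispatches it exactly as you anticipate: the first $i$ columns meet either $\lfloor i/b\rfloor$ or $\lceil i/b\rceil$ blocks in each horizontal band, so each matrix-row of $A_i$ has that many nonzeros.

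There is, however, a genuine gap in your path$\to$placement direction, in the block argument. You write that since each matrix coefficient flips exactly once, each matrix-row accumulates exactly $a$ flips, so each horizontal block-band receives exactly $a$ cells; and then you claim that ``combined with the one-cell-per-column property and the block width $b$, a counting/pigeonhole argument forces exactly one chosen cell in each of the $ab$ blocks.'' That inference is false. In the $4\times 4$ case with $2\times 2$ blocks, the diagonal set $\{(1,1),(2,2),(3,3),(4,4)\}$ has one cell per row, one per column, and exactly $a=2$ cells in each horizontal band, yet it puts two cells in each of the two diagonal blocks and none in the other two. So the premises you list do not suffice; the almost-balanced invariant must be used again, not just to get ``each coefficient flips once.''

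The correct argument (and the one the paper gives) is this: at count $bi$ the matrix is almost-balanced with $b$ rows summing to $bi$, so every matrix-row has exactly $i$ nonzeros; likewise at count $b(i{+}1)$ every matrix-row has exactly $i{+}1$ nonzeros. Hence the $b$ arcs between these two counts flip exactly one coefficient in each matrix-row, which says that among the $b$ cells chosen in columns $bi{+}1,\dots,b(i{+}1)$ there is exactly one in each horizontal band---i.e.\ exactly one in each block of the $(i{+}1)$th vertical stack of blocks. This is where the almost-balanced condition does its real work in this direction.
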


\begin{proof}
Let $P\subset S$ be a valid placement, and let $P_i$ denote the set of rows covered by the cells in the first $i$ columns of $P$. Then we may find a path in $G_S$ in which the $i$th vertex is the matrix corresponding to $P_i$. Each such matrix is balanced, because the cells in the first $i$ columns of $P$ must cover either $\lfloor i/b\rfloor$ or $\lceil i/b\rceil$ of the blocks that intersect each row of the puzzle, and correspondingly each row of the matrix must have either $\lfloor i/b\rfloor$ or $\lceil i/b\rceil$ nonzeros. For each $i$, the two matrices corresponding to $P_i$ and $P_{i+1}$ are connected by an arc, corresponding to the puzzle cell in column $i+1$ of $P$.

In the other direction, any path through $G_S$ defines a set $P$ of cells in $S$, the cells appearing as labels on the path edges. Each edge of the path goes from a matrix of some count $i$ to a matrix of count $i+1$, so the path has to pass through all possible counts and therefore $P$ includes a cell in every column. Each path starts at the all-zero matrix and changes the coefficients of the matrix one by one to all ones, so the path has to pass through edges that change each individual coefficient and therefore $P$ includes a cell in every row. And, because of the balance condition, the part of the path that passes from count $bi$ to count $b(i+1)$ must add exactly one nonzero to each row of the almost-balanced matrices it passes through, from which it follows that $P$ includes one cell in each block in the $i$th column of blocks. Thus, $P$ is a valid placement.
\end{proof}

The construction of the graph $G_S$, and the correspondence between placements and paths, is illustrated for $4\times 4$ Sudoku puzzles with $2\times 2$ blocks in Figure~\ref{fig:reachability}.

To simplify our algorithm for computing $\Nishio(S)$, we precompute the graph $G_U$, where $U$ is the set of all puzzle cells; $G_S$ is the subgraph of edges of $G_U$ whose label belongs to~$S$. We let $\mathbf{0}$ and $\mathbf{1}$ denote the all-zero and all-one matrices respectively (the source and the sink vertices of $G_S$).
Our algorithm performs the following steps:
\begin{enumerate}
\item Initialize a dynamic set $T=\emptyset$ of puzzle cells.
\item Initialize a dynamic set $V=\{\mathbf{1}\}$ of vertices of $G_S$ that no longer need to be searched.
\item Initialize a dynamic set $Z=\{\mathbf{1}\}$ of vertices known to have paths to $\mathbf{1}$.
\item Perform a recursive depth-first search starting from $\mathbf{0}$. For each vertex $v$ reached by the search, add $v$ to $V$, and then perform the following steps for each edge $vw$ in $G_U$ such that the label of $vw$ belongs to $S$:
\begin{enumerate}
\item If $w$ is not in $V$, then search $w$ recursively.
\item If $w$ is in $Z$, then add $v$ to $Z$, and add the label of $vw$ to $T$.
\end{enumerate}
\end{enumerate}
It follows from Lemma~\ref{lem:path-placement-correspondence} that, after performing these steps, the set $T$ constructed by this algorithm is exactly $\Nishio(S)$.

\section{Analysis}

Our algorithm's runtime is dominated by the number of  $a\times b$ almost-balanced matrices, which we abbreviate as  $A_{a,b}$. Although we do not know of a closed-form formula for $A_{a,b}$, it is not difficult to calculate it as a sum over binomial coefficients:

$$A_{a,b}=\sum_{i=0}^{b-1}\left(\binom{b}{i}+\binom{b}{i+1}\right)^a
-\sum_{i=1}^{b-1}\binom{b}{i}^a.$$

Each term in the left sum counts the number of ways of forming an almost-balanced matrix whose $a$ rows each have either $i$ or $i+1$ nonzeros. The almost-balanced matrices in which all rows have the same number $j$ of nonzeros (except for the all-zero and all-one matrices) are counted twice in the left sum, once for $i=j-1$ and a second time for $i=j$. This double counting is corrected  by subtracting the right sum.

\begin{theorem}
The algorithm described in Section~\ref{sec:alg} correctly computes $\Nishio(S)$, for $ab\times ab$ Sudoku puzzles, in time $O(ab\, A_{b,a})$.
\end{theorem}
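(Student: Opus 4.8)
The plan is to handle correctness and the running-time bound separately. For correctness I will build on Lemma~\ref{lem:path-placement-correspondence}, which puts the valid placements contained in $S$ in bijection with the directed paths in $G_S$ from $\mathbf{0}$ to $\mathbf{1}$; under this bijection, $\Nishio(S)$ is exactly the set of puzzle cells whose corresponding arc of $G_S$ lies on at least one source-to-sink path. The first step is to re-characterize those arcs: since every arc of $G_S$ raises the matrix count by exactly one, $G_S$ is acyclic, and any walk formed by concatenating a path from $\mathbf{0}$ to $v$, an arc $vw$, and a path from $w$ to $\mathbf{1}$ has strictly increasing counts and is therefore a simple path. Hence an arc $vw$ lies on a source-to-sink path if and only if $v$ is reachable from $\mathbf{0}$ and $\mathbf{1}$ is reachable from $w$, and it suffices to show that the algorithm's final set $T$ is precisely the set of labels of such arcs.

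The core of the correctness argument is the following invariant of the depth-first search, proved by induction on the order in which the recursive calls finish and using \emph{crucially} that $G_S$ is a DAG: when the recursive call on a vertex $v$ returns, $v$ has been placed in $Z$ if and only if $G_S$ contains a path from $v$ to $\mathbf{1}$ (the base case being the sink $\mathbf{1}$, which is in $Z$ from initialization). For the inductive step, while processing $v$ the algorithm inspects every out-arc $vw$ of $G_S$; since $G_S$ has no cycles, $w$ can be neither $v$ nor an ancestor of $v$ on the recursion stack, so once step~(a) for this arc has run, the search rooted at $w$ has terminated --- either it was just carried out, or it was carried out earlier and $w$ was already in $V$ --- and by the induction hypothesis $w\in Z$ at that moment exactly when $w$ can reach $\mathbf{1}$. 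Therefore step~(b) adds $v$ to $Z$ exactly when some out-neighbor of $v$, hence $v$ itself, can reach $\mathbf{1}$, and it adds the label of $vw$ to $T$ exactly when $w$ can reach $\mathbf{1}$. Combining this with the standard observation that the search started at $\mathbf{0}$ processes every vertex reachable from $\mathbf{0}$ in $G_S$ (except the sink, which is preloaded into $V$ and $Z$ and has no out-arcs anyway), a label is put into $T$ precisely when it labels an arc whose tail is reachable from $\mathbf{0}$ and whose head can reach $\mathbf{1}$ --- i.e., precisely the arcs on source-to-sink paths. Thus $T=\Nishio(S)$.

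For the running time I will bound the size of $G_U$ and note that the search is linear in it. The vertices of $G_U$ are the almost-balanced $0$-$1$ matrices of the shape used to encode row subsets of an $ab\times ab$ Sudoku puzzle, namely $b\times a$ matrices, of which there are $A_{b,a}$. Each such matrix has $ab$ entries, and each out-arc turns a single $0$ entry into a $1$, so every vertex has out-degree at most $ab$ and $G_U$ has $O(ab\,A_{b,a})$ arcs; with a bitstring encoding of the matrices (and maintained row counts), $G_U$ together with its adjacency lists can be precomputed within this same bound. Storing $S$, $V$, $Z$, and $T$ as arrays indexed by puzzle cells or by the matrix encoding makes every membership test, insertion, and label lookup run in $O(1)$ time, so the depth-first search processes each vertex once, doing $O(ab)$ work per vertex to scan its out-arcs and carry out steps~(a) and~(b); the array initializations are dominated by this. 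The total is $O(ab\,A_{b,a})$, as claimed.

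The step I expect to require the most care is the depth-first-search invariant for $Z$. It depends essentially on $G_S$ being acyclic: if $G_S$ had a back arc $vw$ with $w$ still on the recursion stack, the test ``$w\in Z$'' in step~(b) could be consulted before $w$ has been shown to reach $\mathbf{1}$ (possibly via $v$ itself), and the propagation would be incorrect. One must also keep separate the meaning of $V$ (vertices whose search has been \emph{started}) and of $Z$ (vertices already \emph{proven} to reach the sink), since a vertex enters $V$ well before its $Z$-status is settled --- though, again by acyclicity, never while one of its out-neighbors is still being processed.
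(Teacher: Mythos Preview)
Your proposal is correct and follows the same overall approach as the paper: correctness via Lemma~\ref{lem:path-placement-correspondence}, and the time bound via the vertex count $A_{b,a}$ together with the out-degree bound $ab$. The paper's own proof is considerably terser --- it dispatches correctness with a single appeal to the lemma and does not spell out the depth-first-search invariant for $Z$ --- whereas you carefully establish that invariant using acyclicity of $G_S$ to rule out back edges. That extra detail is genuinely useful and fills a gap the paper leaves implicit, but it is an elaboration of the same argument rather than a different route.
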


\begin{proof}
Correctness follows from Lemma~\ref{lem:path-placement-correspondence}. The graph $G_S$ has $A_{b,a}$ vertices; these vertices may be constructed (as an explicit set of matrices) in the stated time bound, by iterating over the possible numbers of nonzeros per row and forming all combinations of rows with the given numbers of nonzeros, following the formula for $A_{a,b}$ above. Each vertex has at most $ab$ outgoing edges, and for each possible outgoing edge we can test whether the corresponding cell belongs to $S$ and if so construct the edge in constant time. Thus, the total number of edges in $G_S$ is at most $ab\, A_{b,a}$, and $G_S$ can be constructed in time $O(ab\, A_{b,a})$. The remaining steps of the algorithm are all also proportional to the size of $G_S$, and therefore also at most $O(ab\, A_{b,a})$.
\end{proof}

For $n\times n$ Sudoku puzzles with square blocks, Stirling's formula may be used to show that each binomial coefficient in the formula for $A_{\sqrt n,\sqrt n}$ is $O(2^{\sqrt n}/n^{1/4})$. Each term in the summation is the sum of two coefficients, taken to the power $\sqrt n$, and is therefore asymptotically bounded by $2^{n-\Omega(\sqrt n\log n)}=o(2^n)$. Multiplication by a polynomial does not change this asymptotic bound, which therefore also applies to $A_{\sqrt n,\sqrt n}$ itself and to the running time of the algorithm.

To compare this to the pattern overlay method, we must count the number of valid placements in an $n\times n$ Sudoku puzzle, again with square blocks. A valid placement may be specified by choosing, for each contiguous group of $\sqrt n$ columns of the puzzle, a one-to-one correspondence from columns to blocks such that corresponding pairs of columns and blocks are covered by the same cell in the valid placement, and by simultaneously choosing in the same way a one-to-one correspondence from rows to blocks. With these two sets of correspondences in hand, the unique cell of the placement within each block can be determined from the row and column corresponding to that block. Therefore, the number of valid placements is exactly $(\sqrt n)!^{2\sqrt n}$~\cite{Dah-LAA-09}. For instance, for $9\times 9$ Sudoku this number is $3!^6=46656$. Asymptotically (via Stirling's formula) this number is $2^{n\log_2 n-O(n)}$. Therefore, the time for the pattern overlay method, which is at least proportional to the number of valid placements, grows more quickly as a function of $n$ than any constant power of the time for our new algorithm.

For $9\times 9$ Sudoku our graph $G_S$ has $290$ vertices and at most $936$ edges, in contrast to the set of $46656$ valid placements, already enough of a difference to predict that our algorithm will be significantly faster than the pattern overlay method. For $16\times 16$ Sudoku the difference is much greater: our graph $G_S$ has $19442$ vertices in this case, whereas the number of valid placements is $4!^8=110075314176$.

\section{Implementation}

\begin{figure}[t]
\centering\includegraphics[height=1.75in]{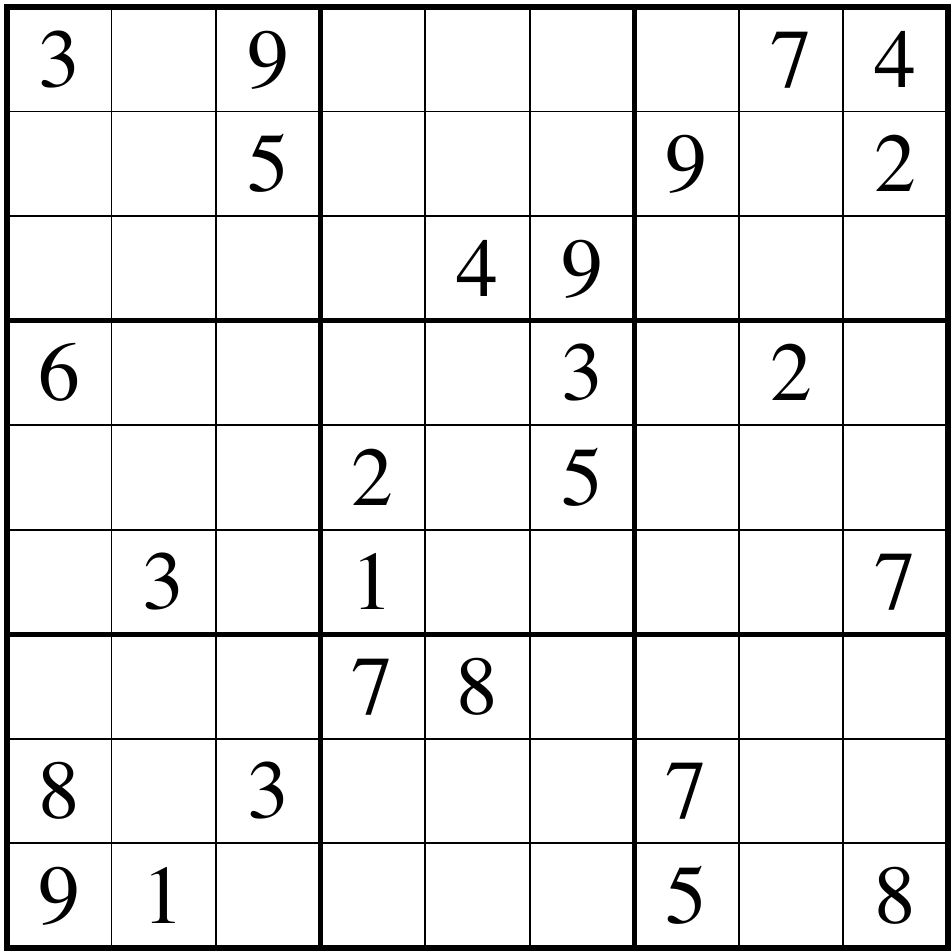}
\caption{A very difficult Sudoku puzzle, formerly impossible for our software without backtracking, now solvable using the new Nishio algorithm.}
\label{fig:possible}
\end{figure}

We have implemented the new algorithm described here, and included it as part of a Python program for solving and generating Sudoku puzzles reported on in our previous paper~\cite{cs.DS/0507053}.\footnote{Available online as the Sudoku package in \url{http://www.ics.uci.edu/~eppstein/PADS/}} We did not carefully time our implementation, because of the difficulty of getting meaningful timings in a slow interpreted language such as Python, but the addition of this deduction rule did not cause any perceptible slowdown to our program.

Our program includes deductive rules that, unlike the Nishio rule described here, combine information about the placements of multiple digits in multiple cells, including the non-repetitive path rules from our previous paper, and a heuristic reduction to 2-SAT. The 2-SAT instances that our software constructs from a Sudoku puzzle also include implications about the placements of single digits that are intended to cover many common Nishio examples. But despite these additional rules, the new Nishio solver has increased the power of our solver. In particular,
among a collection of $36$ Sudoku puzzles that our software previously found impossible to solve without backtracking, one problem shown in Figure~\ref{fig:possible} has now become solvable thanks to the new Nishio solver.

Perhaps more importantly, by using this Nishio solver, the software can now provide simpler human-readable explanations of how to solve difficult Sudoku puzzles. For instance, the puzzle shown in Figure~\ref{fig:example} was previously solvable by our software, but only by means of both the non-repetitive path rules and the reduction to 2-SAT. In the new version of our solver, the only complex rule needed to solve this puzzle is Nishio.

\section{Hardness}

The time bound for our Nishio algorithm is exponential; to justify our failure to provide a polynomial algorithm, we show that $\Nishio(S)$ is NP-hard to compute.
Computing $\Nishio(S)$ is a functional problem, and to determine its computational complexity it is more convenient to reduce it to a decision problem, which we call the \emph{Nishio decision problem}. In this problem, one is given as input two integers $a$ and $b$, and a set $S$ of cells in an $ab\times ab$ grid for Sudoku puzzles with block size $a\times b$. The output is yes if this set of cells contains a valid solution; that is, if $\Nishio(S)\ne\emptyset$; it is no if $\Nishio(S)=\emptyset$.

\begin{theorem}
The Nishio decision problem is NP-complete.
\end{theorem}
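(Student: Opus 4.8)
The plan is to prove both halves of NP-completeness, with membership in NP being routine and NP-hardness requiring a careful reduction. For membership, note that a valid placement is a set of exactly $ab$ cells, hence a polynomial-size object; given a candidate set one checks in polynomial time that every cell lies in $S$ and that the set meets each row, each column, and each block exactly once. Thus the question ``$\Nishio(S)\neq\emptyset$'' has an efficiently verifiable certificate.

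For hardness I would reduce from $3$-dimensional matching: given disjoint $m$-element sets $A$, $B$, $C$ and a family $T\subseteq A\times B\times C$, decide whether $T$ has a perfect matching --- a problem well suited here because its three symmetric constraints mirror the row/column/block structure. The guiding observation is that a valid placement is exactly a set of cells covering each row once, each column once, and each block once, that is, three simultaneous exact-cover conditions, so valid placements form a species of $3$-dimensional matching with rows, columns and blocks playing the roles of the three ground sets. The crux, and the step I expect to be the main obstacle, is that in a Sudoku grid the block of a cell is \emph{determined} by its row and column, so these three coordinates are not independent and one cannot literally realize an arbitrary triple family. The reduction must therefore manufacture the missing freedom from the grid geometry: I would take $a$ and $b$ both growing with $m$ (single-digit deduction on grids of bounded shape is trivially polynomial, so the hardness is necessarily a statement about the whole family of block shapes), partition the $ab$ rows into groups indexed by the elements of $A$ and the $ab$ columns into groups indexed by the elements of $B$, and define $S$ to consist only of cells that encode triples of $T$, arranged so that the ``one cell per block'' requirement forces each element of $C$ to be covered exactly once. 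Dummy rows, columns, and available cells are then added to pad the construction up to legal $ab\times ab$ dimensions and to ensure the information-free part of the grid is always completable.

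Correctness then splits into the two usual directions. From a valid placement of the constructed instance one reads off, block by block, the triples selected; the row, column, and block conditions translate into the statement that each element of $A$, $B$, and $C$ is covered exactly once, i.e.\ a perfect matching of $T$. Conversely, a perfect matching pins down one chosen cell in each informative block, and the padding guarantees the remaining blocks can be filled out to a full valid placement. The delicate point throughout --- and the reason the construction needs both $a$ and $b$ large --- is verifying that the ``exactly one cell per block'' constraint faithfully encodes the third exact-cover condition despite the block being a forced function of row and column; this, together with checking that the instance has size polynomial in $m$, is where essentially all of the work lies.
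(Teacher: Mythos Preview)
Your NP-membership argument is fine and matches the paper's.

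For hardness, however, the proposal has a genuine gap at precisely the point you yourself flag as ``where essentially all of the work lies.'' You correctly observe that the block of a cell is a function of its row and column, so the three Sudoku constraints are not three independent exact-cover conditions in the way a generic 3DM instance requires. You then promise to ``manufacture the missing freedom'' by grouping rows and columns and adding padding, ``arranged so that the one-cell-per-block requirement forces each element of $C$ to be covered exactly once''---but no construction is given. That sentence \emph{is} the reduction, and without it there is no proof. The obstacle is real: each block is the intersection of one band of rows with one stack of columns, so the realizable triples $(r,c,\mathrm{block}(r,c))$ are highly constrained, and indeed a valid placement factors as one independent permutation per band together with one per stack, a structure with no evident slot in which to hide an arbitrary third coordinate. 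A 3DM reduction may well be possible, but it would need concrete gadgetry you have not supplied; acknowledging that ``this is the delicate point'' is not the same as resolving it.

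The paper sidesteps the difficulty by reducing from 3-SAT rather than 3DM. On an $n^2\times n^2$ grid with $n\times n$ blocks and $n\ge\max(v+c,\,3c+1)$, each variable gets a gadget consisting of a horizontal run of blocks with two cells of $S$ per block, wired so that their conflict graph is a single cycle admitting exactly two consistent selections (true/false). Each clause gets a gadget of four blocks sharing four rows---one ``truth-setting'' row and three ``conflict'' rows---and sharing columns with the relevant variable gadgets, so that the truth-setting cell in a term's block is usable only under the assignment that satisfies that term; a terminal block with its three cells in a single column forces some term's truth-setting cell to be chosen. All cells lying outside the rows, columns, and blocks touched by gadgets are added to $S$ so that the remainder of the grid is trivially completable. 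With the gadgets spelled out at this level of detail, the correspondence between valid placements and satisfying assignments becomes a direct verification, which is exactly what your outline is missing.
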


\begin{figure}[t]
\centering\includegraphics[width=\textwidth]{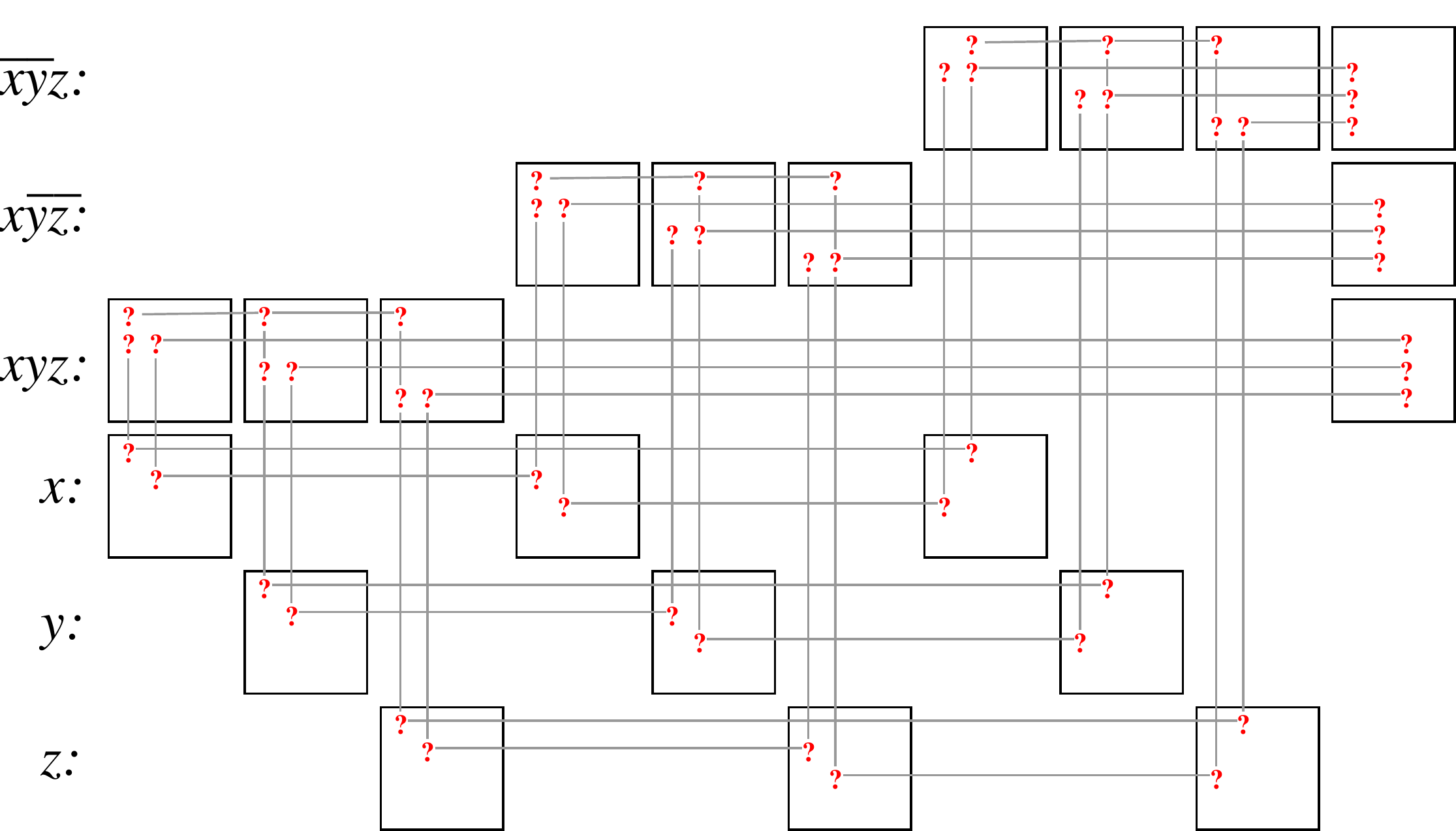}
\caption{Reduction from 3-SAT to the Nishio decision problem}
\label{fig:3sat}
\end{figure}

\begin{proof}
Clearly, the problem is in NP: to show that the output is yes, we need only exhibit a single valid placement that is a subset of $S$.

To show NP-hardness, we reduce 3-SAT to the Nishio decision problem, by transforming any 3-SAT instance into an instance $S$ of the Nishio problem. The transformed problem uses an $n^2\times n^2$ grid with block size $n\times n$ where $n$ is chosen to be at least as large as $\max(v+c,3c+1)$ and where $v$ and $c$ are the numbers of variables and clauses in the 3-SAT instance. We create a gadget for each 3-SAT variable consisting of certain cells in a horizontally aligned set  of Sudoku blocks. Each of these blocks contains two cells of $S$, in two distinct rows and columns of the grid, and each of these cells is in the same row as one cell in a different block of the gadget, so that the conflict graph of all these cells forms a single  cycle. There are only two consistent ways of placing a digit within these blocks and rows: one using the even positions in the cycle (which we associate with assignments of the value true to the variable)  and the other using the odd positions (which we associate with assignments of the value false to the variable).

We also make a gadget for each 3-SAT clause consisting of cells in a horizontally aligned set of four blocks, three for the three terms of the clause and one more terminal block. We place into $S$ three cells in each of these four blocks, lying on four rows that pass through these four blocks, one ``truth-setting row'' and three ``conflict rows''. In the block corresponding to a term $t$, $S$ contains one cell in the truth-setting row, and two cells in the conflict row for that block; all three of these cells are placed within the same two columns as the cells in the corresponding block of the variable gadget for $t$. These cells are placed in such a way that, no matter which truth assignment is used for $t$, one of the two positions in the conflict row for the clause remains available. However, the choice of the column on which to place the cell in the truth-setting row of the block for term $t$ is made in such a way that this cell may only be used when $t$ is given a truth assignment compatible with the clause. In the terminal block, all three cells are chosen within a single column, one in each of the gadget's three conflict rows.

Finally, in order to ensure that the set of cells in these gadgets can be part of a valid placement for the whole puzzle, we add to $S$ any cell that is outside the union of the blocks, rows, and columns used in these gadgets.

In any valid placement that is a subset of $S$, the cells of the placement within each variable gadget must correspond to a valid truth assignment to that variable, for otherwise not all of the gadget's blocks could be covered by the placement. In each clause gadget, the terminal block must have one of its cells chosen, on one of the conflict rows, and the corresponding term block must have a cell chosen on the truth-setting row, compatibly with the truth assignment to that term. Therefore, any valid placement must necessarily correspond to a truth assignment for the given 3-SAT instance.

In the other direction, suppose that our given 3-SAT instance has a valid truth assignment; that is, for each clause we may choose a term in that clause that is assigned to be true. We may use this truth assignment, and the choice of the true term in each clause, to guide the choice of a non-conflicting set of cells in each row, column, and block of the gadgets of the Nishio instance. Note that this choice necessarily includes a cell within each of the rows, columns, and blocks used by the gadgets. The remaining rows, columns, and blocks of the puzzle grid may be covered by cells chosen by a greedy algorithm as there is no possibility of additional conflicts.

Thus, the two problems, the initial 3-SAT instance and the Nishio decision problem derived from it, are equivalent: one problem has a positive solution if and only if the other does. This equivalence completes the NP-completeness proof.
\end{proof}

The gadgets used in this reduction are depicted in schematic form in Figure~\ref{fig:3sat}. It follows immediately that computing $\Nishio(S)$ for arbitrary sets $S$ is NP-hard.
Conceptually, however, there is a minor difficulty with applying this result to Sudoku puzzles,
because whenever we apply the Nishio rule to a Sudoku puzzle we will be guaranteed that $\Nishio(S)$ is nonempty. Thus, the class of sets for which it is difficult to tell whether $\Nishio(S)$ is empty and the class of sets arising in puzzle solving may be disjoint. 
To resolve this issue, we use an additional trick.

\begin{theorem}
It is NP-hard to compute $\Nishio(S)$, even restricted to the class of instances for which $\Nishio(S)$ is nonempty.
\end{theorem}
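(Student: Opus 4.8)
The plan is to give a polynomial-time reduction from 3-SAT (equivalently, from the Nishio decision problem shown NP-complete above) to the task of computing $\Nishio(S)$ on the restricted class of instances for which $\Nishio(S)$ is promised to be nonempty. A naive attempt — take the instance produced in the previous proof and juxtapose it, on disjoint rows, columns, and blocks, with some fixed trivially satisfiable Nishio instance — does not help, since a valid placement of the enlarged grid must still cover every row, column, and block belonging to the original part, and those rows and columns contain only the gadget cells; so the enlarged instance is satisfiable exactly when the original one is. Instead, from a 3-SAT formula $\phi$ I will build in polynomial time a single Nishio instance $S$, together with a distinguished cell $w$, with two properties: (i) $\Nishio(S)\neq\emptyset$ whether or not $\phi$ is satisfiable, so $S$ lies in the promised class; and (ii) $w\in\Nishio(S)$ if and only if $\phi$ is satisfiable. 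Any algorithm that computes $\Nishio(S)$ on the promised class then decides 3-SAT by testing whether $w$ belongs to the returned set, proving the stated NP-hardness.

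To construct $S$, I would reuse the variable and clause gadgets of the previous proof and add one global \emph{escape variable} $z$: a variable gadget of the usual kind (a single conflict cycle with two consistent states, ``$z$ true'' and ``$z$ false'') that is made to appear as a fourth term in every clause. Thus each clause gadget now has four term branches — the three literals of the clause and one extra branch tied to $z$ — with the $z$-branch arranged so that its truth-setting cell becomes available precisely when the $z$-gadget is in its ``$z$ true'' state. As in the previous proof, every cell lying outside all of the rows, columns, and blocks used by the gadgets is added to $S$, so that any internally consistent choice of cells in the gadgets extends greedily to a valid placement of the whole grid; the grid side length is taken polynomially large, essentially the same bound as before but with $4c$ in place of $3c$ to accommodate the extra branch. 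The distinguished cell $w$ is chosen to be a cell of the $z$-gadget that is used in its ``$z$ false'' state but not in its ``$z$ true'' state.

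The correctness argument splits into two directions. Setting $z$ to true, every clause gadget can be covered through its $z$-branch, and — provided the gadgets are laid out with enough slack that the escape branches of distinct clauses occupy disjoint rows and columns — the rest of the grid can be completed greedily; hence $\Nishio(S)\neq\emptyset$ always, and the promise holds. Conversely, in any valid placement that sets $z$ to false, each clause gadget must be covered through one of its three literal branches, which, exactly as in the previous proof, forces the corresponding literal to be true under the assignment read off from the variable gadgets; so a valid placement with $z$ false exists if and only if $\phi$ is satisfiable. Since a valid placement uses $w$ precisely when it sets $z$ to false, we conclude $w\in\Nishio(S)$ iff $\phi$ is satisfiable, which together with the earlier observations completes the reduction.

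The part that needs care is the layout bookkeeping, now for the augmented construction: I must route the $z$-gadget and the new escape branches so that (a) when $z$ is true all $c$ clauses can use their escape branches simultaneously without mutual conflict — this is exactly what guarantees $\Nishio(S)\neq\emptyset$ — and (b) an escape branch can never be used to cover a clause gadget while $z$ is false, so that no spurious valid placement is created when $\phi$ is unsatisfiable. Placing each escape branch in fresh rows and columns inside otherwise-unused blocks makes both properties plausible, but verifying the absence of conflicts and of unintended placements is the main obstacle, just as the corresponding check is the heart of the previous theorem. Finally, it is worth noting that no extra argument is needed to make this bear on actual puzzles: the theorem only concerns the abstract promised class $\{\,S:\Nishio(S)\neq\emptyset\,\}$, and our $S$ is always a member of it.
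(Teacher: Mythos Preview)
Your approach is correct and shares the paper's central idea: introduce an escape variable whose ``true'' setting trivially satisfies every clause, so that the resulting Nishio instance is always nonempty, and then test whether a cell tied to the ``false'' setting of that variable lies in $\Nishio(S)$.

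The difference is where the modification happens. You modify the \emph{gadgets}: each clause gadget is enlarged from three term-branches to four, with the extra branch wired to a new $z$-gadget, and you then have to re-verify the layout (that all $c$ escape branches can be used simultaneously without conflict, that the $z$-gadget has enough blocks to interface with every clause, and that no spurious placements arise). The paper instead modifies the \emph{formula}: it adds $v_0$ to every clause of the given 3-SAT instance $X$, obtains a 4-SAT instance, and then splits each 4-clause $(a\vee b\vee c\vee d)$ into two 3-clauses $(a\vee b\vee x)\wedge(\lnot x\vee c\vee d)$ with a fresh auxiliary $x$. The result $X'$ is again a 3-SAT instance, always satisfiable (set $v_0$ true), and satisfiable with $v_0$ false iff $X$ is satisfiable. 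The previous theorem's reduction is then applied to $X'$ verbatim, with no new gadgetry and no new layout check. What the paper's route buys is exactly the elimination of the bookkeeping you flag as ``the main obstacle'': by staying in 3-SAT it treats the earlier reduction as a black box. Your route is a bit more direct conceptually but requires redoing the geometric verification for 4-term clause gadgets.
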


\begin{proof}
Starting with a hard 3-SAT instance $X$, we modify $X$ to produce a new 3-SAT instance $X'$ with an additional variable $v_0$, such that there always exists a solution to $X'$ with $v_0$ true but such that there exist solutions with $v_0$ false if and only if $X$ is satisfiable.
To do so, we turn $X$ into a 4-SAT instance by adding $v_0$ to each of its clauses,
and then split every four-term clause $(a\vee b\vee c\vee d)$ into two three-term clauses
$(a\vee b\vee x)\wedge (\lnot x\vee c\vee d)$ where $x$ is a new variable introduced for each clause that needs to be split.

We then use the same reduction as the one above, on the modified instance $X'$, and let $S$ be the set of cells constructed by the reduction. Then $\Nishio(S)$ will always be nonempty (it will contain valid placements corresponding to truth assignments with $v_0$ true). However, it will contain the cells from the false assignment to $v_0$ if and only if the initial 3-SAT instance $X$ is satisfiable. Because it is NP-hard to determine whether $X$ is satisfiable, it is also NP-hard to determine whether these cells belong to $\Nishio(S)$.
\end{proof}

\section{Conclusions}

We have developed a new algorithm for solving single-digit deduction  problems in Sudoku. Despite the NP-hardness of the problem, our algorithm solves it in an amount of time that is only mildly exponential, significantly faster than the pattern overlay method previously used for the same task. Our algorithm does not explain how humans might solve Nishio problems (itself an interesting problem for future research) but can be used as one of the deduction rules in a human-like non-backtracking Sudoku solver.

The requirements that a valid placement cover each row and column of the puzzle exactly once can be formulated in terms of bipartite matching. but the additional constraints coming from the blocks of the puzzle form an obstacle to the polynomial-time solution of this sort of matching problem. Matching has many other applications, and it is possible that the techniques we develop here may be of interest in some of them. Our method also resembles dynamic programs for the traveling salesman problem and related problems~\cite{Bel-JACM-62,HelKar-JSIAM-62} and might potentially help speed up some of these problems.

A curiosity arising from this work is that the graph of $2\times 2$ almost-balanced matrices in Figure~\ref{fig:reachability} is the skeleton of the rhombic dodecahedron. Is this a coincidence, or do the larger graphs constructed in the same way have a similar polyhedral description?

The graph reachability algorithm at the heart of our  method takes time linear in the number of edges of  graph~$G_S$. In other application domains where graph reachability algorithms also apply, we have obtained additional speedups using bit-parallel programming techniques~\cite{Epp-MSRI-02,Epp-JGAA-11}. Can similar techniques also speed up Nishio deduction?

Another question in the nexus of Sudoku puzzle solving and exponential time exact algorithms remains unsolved: what is the worst case time for solving complete $n\times n$ Sudoku puzzles? Is it possible to achieve $2^{o(n^2)}$ time for this problem?

\raggedright
\bibliographystyle{abuser}
\bibliography{sudoku}

\begin{thebibliography}{10}

\bibitem{ArnLucTaa-CMJ-10}
E.~Arnold, S.~Lucas, and L.~Taalman.
\newblock {Gr{\"o}bner basis representations of Sudoku}.
\newblock {\em College Math. J.} 41(2):101{--}111, 2010,
  \href{http://dx.doi.org/10.4169/074683410X480203}%
{doi:10.4169/074683410X480203}.

\bibitem{BabPelSto-SPL-10}
P.~Babu, K.~Pelckmans, P.~Stoica, and J.~Li.
\newblock {Linear systems, sparse solutions, and Sudoku}.
\newblock {\em IEEE Signal Processing Lett.} 17(1):40{--}42, 2010,
  \href{http://dx.doi.org/10.1109/LSP.2009.2032489}%
{doi:10.1109/LSP.2009.2032489}.

\bibitem{Bel-JACM-62}
R.~Bellman.
\newblock {Dynamic programming treatment of the travelling salesman problem}.
\newblock {\em J. ACM} 9:61{--}63, 1962,
  \href{http://dx.doi.org/10.1145/321105.321111}%
{doi:10.1145/321105.321111}.

\bibitem{Bjo-STACS-10}
A.~Bj{\"o}rklund.
\newblock {Exact covers via determinants}.
\newblock {\em 27th Int. Symp. Theoretical Aspects of Computer Science},
  pp.~727{--}737. Schloss Dagstuhl, Leibniz International Proceedings in
  Informatics~5, 2010, \href{http://dx.doi.org/10.4230/LIPIcs.STACS.2010.2447}%
{doi:10.4230/LIPIcs.STACS.2010.2447},
  \href{http://arxiv.org/abs/0910.0460}{arXiv:0910.0460}.

\bibitem{BjoHus-Algo-08}
A.~Bj{\"o}rklund and T.~Husfeldt.
\newblock {Exact algorithms for exact satisfiability and number of perfect
  matchings}.
\newblock {\em Algorithmica} 52(2):226{--}249, 2008,
  \href{http://dx.doi.org/10.1007/s00453-007-9149-8}%
{doi:10.1007/s00453-007-9149-8}.

\bibitem{Bro-NAW-06}
A.~E. Brouwer.
\newblock {Sudoku puzzles and how to solve them}.
\newblock {\em Nieuw Archief Wisk. (Ser. 5)} 7:258{--}263, 2006,
  \url{http://www.win.tue.nl/~aeb/preprints/sudoku.pdf}.

\bibitem{Dah-LAA-09}
G.~Dahl.
\newblock {Permutation matrices related to Sudoku}.
\newblock {\em Linear Algebra Appl.} 430(8-9):2457{--}2463, 2009,
  \href{http://dx.doi.org/10.1016/j.laa.2008.12.023}%
{doi:10.1016/j.laa.2008.12.023}.

\bibitem{Epp-MSRI-02}
D.~Eppstein.
\newblock {Searching for spaceships}.
\newblock {\em More Games of No Chance}, pp.~433{--}453. Cambridge Univ. Press,
  MSRI Publications~42, 2002,
  \href{http://arxiv.org/abs/cs.AI/0004003}{arXiv:cs.AI/0004003},
  \url{http://www.msri.org/publications/books/Book42/files/eppstein.pdf}.

\bibitem{cs.DS/0507053}
D.~Eppstein.
\newblock {Nonrepetitive paths and cycles in graphs with application to
  Sudoku}, 2005,
  \href{http://arxiv.org/abs/cs.DS/0507053}{arXiv:cs.DS/0507053}.

\bibitem{Epp-JGAA-11}
D.~Eppstein.
\newblock {Recognizing partial cubes in quadratic time}.
\newblock {\em J. Graph. Algorithms Appl.} 15(2):269{--}293, 2011,
  \url{http://jgaa.info/accepted/2011/Eppstein2011.15.2.pdf}.

\bibitem{Gee-KBIIES-07}
Z.~W. Geem.
\newblock {Harmony search algorithm for solving Sudoku}.
\newblock {\em Knowledge-based Intelligent Information and Engineering
  Systems}, pp.~371{--}378. Springer, Lect. Notes Comput. Sc. 4692, 2007,
  \href{http://dx.doi.org/10.1007/978-3-540-74819-9\_46}%
{doi:10.1007/978-3-540-74819-9\_46}.

\bibitem{GorLon-06}
P.~Gordon and F.~Longo.
\newblock {\em {MENSA Guide to Solving Sudoku}}.
\newblock Sterling, 2006.

\bibitem{HelKar-JSIAM-62}
M.~Held and R.~M. Karp.
\newblock {A dynamic programming approach to sequencing problems}.
\newblock {\em J. SIAM} 10(1):196{--}210, 1962,
  \href{http://dx.doi.org/10.1137/0110015}%
{doi:10.1137/0110015}.

\bibitem{HerMur-NAMS-07}
A.~M. Herzberg and M.~R. Murty.
\newblock {Sudoku squares and chromatic polynomials}.
\newblock {\em Notices Amer. Math. Soc.} 54(6):708{--}717, 2007,
  \url{http://www.ams.org/notices/200706/tx070600708p.pdf}.

\bibitem{HunPonTuc-UMAP-08}
M.~Hunt, C.~Pong, and G.~Tucker.
\newblock {Difficulty-driven Sudoku puzzle generation}.
\newblock {\em UMAP J.} 29(3):343{--}362, 2008.

\bibitem{Koc-ORP-05}
T.~Koch.
\newblock {Rapid mathematical programming or how to solve Sudoku puzzles in a
  few seconds}.
\newblock {\em Conf. German Oper. Res. Soc.}, pp.~21{--}26. Springer, Oper.
  Res. Proc. 2005, 2006, \href{http://dx.doi.org/10.1007/3-540-32539-5\_4}%
{doi:10.1007/3-540-32539-5\_4}.

\bibitem{Law-IPL-76}
E.~L. Lawler.
\newblock {A note on the complexity of the chromatic number problem}.
\newblock {\em Inf. Proc. Lett.} 5(3):66{--}67, 1976,
  \href{http://dx.doi.org/10.1016/0020-0190(76)90065-X}%
{doi:10.1016/0020-0190(76)90065-X}.

\bibitem{Lew-Heu-07}
R.~Lewis.
\newblock {Metaheuristics can solve sudoku puzzles}.
\newblock {\em J. Heuristics} 13(4):387{--}401, 2007,
  \href{http://dx.doi.org/10.1007/s10732-007-9012-8}%
{doi:10.1007/s10732-007-9012-8}.

\bibitem{LynOua-AIMATH-06}
I.~Lynce and J.~Ouaknine.
\newblock {Sudoku as a SAT problem}.
\newblock {\em 9th Int. Symp. Artificial Intelligence and Math.}, 2006,
  \url{http://sat.inesc.pt/~ines/publications/aimath06.pdf}.

\bibitem{Mad-IPL-06}
B.~A. Madsen.
\newblock {An algorithm for exact satisfiability analysed with the number of
  clauses as parameter}.
\newblock {\em Inf. Proc. Lett.} 97(1):28{--}30, 2006,
  \href{http://dx.doi.org/10.1016/j.ipl.2005.08.011}%
{doi:10.1016/j.ipl.2005.08.011}.

\bibitem{ManKol-CEC-07}
T.~Mantere and J.~Koljonen.
\newblock {Solving, rating and generating Sudoku puzzles with GA}.
\newblock {\em IEEE Cong. Evolutionary Computation}, pp.~1382{--}1389, 2007,
  \href{http://dx.doi.org/10.1109/CEC.2007.4424632}%
{doi:10.1109/CEC.2007.4424632}.

\bibitem{MooGun-MWALS-06}
T.~K. Moon and J.~H. Gunther.
\newblock {Multiple constraint satisfaction by belief propagation: an example
  using Sudoku}.
\newblock {\em IEEE Mountain Worksh. Adaptive and Learning Systems},
  pp.~122{--}126, 2006, \href{http://dx.doi.org/10.1109/SMCALS.2006.250702}%
{doi:10.1109/SMCALS.2006.250702}.

\bibitem{MooGunKup-TIT-09}
T.~K. Moon, J.~H. Gunther, and J.~J. Kupin.
\newblock {Sinkhorn solves Sudoku}.
\newblock {\em IEEE Trans. Inf. Theory} 55(4):1741{--}1746, 2009,
  \href{http://dx.doi.org/10.1109/TIT.2009.2013004}%
{doi:10.1109/TIT.2009.2013004}.

\bibitem{MorTog-GECCO-07}
A.~Moraglio and J.~Togelius.
\newblock {Geometric particle swarm optimization for the Sudoku puzzle}.
\newblock {\em 9th Conf. Genet. Evol. Comput.}, pp.~118{--}125, 2007,
  \href{http://dx.doi.org/10.1145/1276958.1276975}%
{doi:10.1145/1276958.1276975}.

\bibitem{OstPar-ICTAI-08}
R.~Ostrowski and L.~Paris.
\newblock {From XSAT to SAT by exhibiting equivalencies}.
\newblock {\em 20th IEEE Conf. Tools with Artificial Intelligence},
  pp.~84{--}91, 2008, \href{http://dx.doi.org/10.1109/ICTAI.2008.80}%
{doi:10.1109/ICTAI.2008.80}.

\bibitem{RosTaa-11-Ch7.3}
J.~Rosenhouse and L.~Taalman.
\newblock {Section 7.3: Sudoku as a problem in graph coloring}.
\newblock {\em Taking Sudoku Seriously}, pp.~127{--}131. Oxford University
  Press, 2011.

\bibitem{Sim-MRCSP-05}
H.~Simonis.
\newblock {Sudoku as a constraint problem}.
\newblock {\em Proc. 4th Int. Works. Modelling and Reformulating Constraint
  Satisfaction Problems}, pp.~13{--}27, 2005,
  \url{http://4c.ucc.ie/~brahim/mod-proc.pdf}.

\bibitem{Ste-07}
P.~Stephens.
\newblock {\em {Mastering Sudoku Week By Week}}.
\newblock Sterling, 2007.

\bibitem{POM}
Sudopedia.
\newblock {Pattern Overlay Method},
  \url{http://sudopedia.org/wiki/Pattern\_Overlay\_Method}.

\bibitem{Web-LPAR-05}
T.~Weber.
\newblock {A SAT-based Sudoku solver}.
\newblock {\em Proc. 12th Int. Conf. on Logic for Programming, Artificial
  Intelligence, and Reasoning, Short Papers}, pp.~11{--}15, 2005.

\bibitem{YatSet-ITF-03}
T.~Yato and T.~Seta.
\newblock {Complexity and completeness of finding another solution and its
  application to puzzles}.
\newblock {\em IEICE Trans. Fundamentals} E86-A(5):1052{--}1060, 2003.

\end{thebibliography}
\end{document}